\begin{document}

\preprint{APS/123-QED}

\title{Nonlocal sets of orthogonal product states in arbitrary multipartite quantum system}
\author{Dong-Huan Jiang$^{1}$}
\author{Guang-Bao Xu$^{1,2}$}
\email{xu\_guangbao@163.com}
\affiliation{
College of Mathematics and Systems Science, Shandong University of Science and Technology, Qingdao, 266590, China\\
$^{2}$State Key Laboratory of Networking and Switching Technology (Beijing University of Posts and Telecommunications), Beijing, 100876, China
}%

\date{\today}

\begin{abstract}
Recently, much attention have been paid to the constructions of nonlocal multipartite orthogonal product states. Among the existing results, some are relatively complex in structure while others have many constraint conditions. In this paper, we firstly give a simple method to construct a nonlocal set of orthogonal product states in  $\otimes_{j=1}^{n}\mathbb{C}^{d}$ for $d\geq 2$. Then we give an ingenious proof for local indistinguishability of the set constructed by our method. According to the characteristics of this construction method, we get a new construction of nonlocal set with fewer states in the same quantum system. Furthermore, we generalize these two results to a more general $\otimes_{i=1}^{n}\mathbb{C}^{d_{j}}$ quantum system for $d_{j}\geq 2$. Compared with the existing results, the nonlocal set of multipartite orthogonal product states constructed by our method has fewer elements and is more simpler.
\begin{description}
\item[PACS numbers]
03.65.Ud, 03.67.Mn
\end{description}
\end{abstract}

\pacs{Valid PACS appear here}
\maketitle


\section{\label{sec:level1}Introduction\protect}

Local discrimination of orthogonal quantum states has attracted a lot of attention during the last twenty years \cite{Bennett1999,CHB1999,Walgate2000,
Niset2006,Walgate2002,Jiang2010,Yu2011,Yang2013,Xin2008,Duan2010,MA2014,Zhang2014,
Wang2015,Sixia2015,Zhangzc2015,DiVincenzo2003}. As we know, if each of some separated parties owns one subsystem of a quantum state that is chosen from a set of orthogonal
product states, they cannot necessarily discriminate this quantum state using only local operations and classical communication (LOCC). In fact, the local distinguishability of quantum states can be used to design quantum protocols, such as quantum cryptography \cite{Guo2001,Rahaman2015,JWang2017,Yang2015,DHJQ2020,Hfq2020}. That's the reason why so many scholars are engaged in the research work of local discrimination of quantum states.

Generally speaking, it is believed that quantum entanglement increases the difficulty of distinguishing quantum states by LOCC. Bennett \emph{et al}. \cite{Bennett1999} firstly gave a construction of locally indistinguishable orthogonal product states. They called this phenomenon quantum nonlocality without entanglement. Thus a set of locally indistinguishable orthogonal product states is called nonlocal set. Inspired by Bennett \emph{et al}.'s result, numerous related results \cite{Wang2015} were proposed. Most of these results are about the local distinguishability of bipartite orthogonal product states. In fact, it is more difficult to construct a set of locally indistinguishable multipartite orthogonal product states than to construct a bipartite one.

So far, many interesting results \cite{HalderM2019,Chen2004,Xu2016,Halder2018,
Zhang2017,Rinaldis,Wang2017}, which are concerned with the constructions of locally indistinguishable multipartite orthogonal product states, were proposed. Niset \emph{et al.} \cite{Niset2006} gave a class of nonlocal orthogonal product bases for $d_{i}\geq n-1$, where $n$ denotes the number of the subsystems and $d_{i}$ denotes the dimension of the $i$th subsystem. Xu \emph{et al.} \cite{Xu2016} gave a construction of nonlocal set with only $2n$ product states in $\otimes_{i=1}^{n}\mathbb{C}^{d_{j}}$ for $d_{j}\geq 2$ and $n\geq 3$. In essence, the set of the quantum states constructed by xu \emph{et al.} can be projected to a low-dimensional quantum system, \emph{i.e.}, $\otimes_{i=1}^{n}\mathbb{C}^{2}$ quantum system. Wang \emph{et al.} \cite {Wang2017} proposed a method to construct a nonlocal set of multipartite orthogonal product states with a nonlocal set of bipartite orthogonal product states. Halder \cite{Halder2018} exhibited that a new construction of locally indistinguishable product basis with $2n(d-1)$ members in $\otimes_{i=1}^{n}\mathbb{C}^{d}$ for $d\geq 2$. Zhang \emph{et al.} \cite{Zhang2017} gave a construction of nonlocal multipartite product states with a set of nonlocal bipartite product states. Although these achievements have been made, there are still many problems deserved to be further studied. For example, is there a simpler and more general way to construct nonlocal orthogonal product states in arbitrary multipartite system?

In this paper, we propose a direct method to construct a nonlocal set of multipartite orthogonal product states in $\otimes_{j=1}^{n}\mathbb{C}^{d}$ for $d\geq 2$. The set constructed by our method has a symmetrical structure. Combined with the characteristics of our construction method, we give a new set with smaller elements by adding a ``stopper'' state. On the other hand, we generalize our methods to $\otimes_{j=1}^{n}\mathbb{C}^{d_{j}}$ quantum system for $d_{j}\geq 2$. That is, we construct a nonlocal set of multipartite orthogonal product states in arbitrary multipartite quantum system. In addition, we compare our methods with the existing works. The results show that our methods are more simper and effectiver.

\section{\label{sec:level1}Preliminaries}

\theoremstyle{remark}
\newtheorem{definition}{\indent Definition}
\newtheorem{lemma}{\indent Lemma}
\newtheorem{theorem}{\indent Theorem}
\newtheorem{corollary}{\indent Corollary}

\def\QEDclosed{\mbox{\rule[0pt]{1.3ex}{1.3ex}}}
\def\QED{\QEDclosed}
\def\proof{\indent{\em Proof}.}
\def\endproof{\hspace*{\fill}~\QED\par\endtrivlist\unskip}

In this section, we introduce some preliminaries that will be used in the following sections.
\begin{lemma} \cite{Gai2007}
If $\omega=e^{\frac{2\pi i}{d}}$, we have
$$(\omega)^{\lambda} \neq (\omega)^{\mu}$$
for $1\leq \lambda<\mu\leq d$ and
$$(\omega^{\tau})^{d}=1$$
for $1\leq \tau\leq d$,
where $i=\sqrt{-1}$; $\lambda$, $\mu$, $\tau$, $d$ are integers and $d\geq 2$.
\end{lemma}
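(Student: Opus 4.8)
The plan is to work directly from the definition $\omega = e^{2\pi i/d}$ and reduce both claims to the single elementary fact that, for a real number $x$, one has $e^{2\pi i x}=1$ if and only if $x$ is an integer. Both assertions are then immediate consequences of the periodicity of the complex exponential, so the proof is short and requires no deep machinery.

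For the first claim I would argue by contradiction. Suppose $\omega^{\lambda}=\omega^{\mu}$ for some indices with $1\leq \lambda<\mu\leq d$. Writing $\omega^{\lambda}=e^{2\pi i\lambda/d}$ and $\omega^{\mu}=e^{2\pi i\mu/d}$, the assumed equality forces $e^{2\pi i(\mu-\lambda)/d}=1$, which by the fact above holds if and only if $(\mu-\lambda)/d\in\mathbb{Z}$, i.e. $d\mid(\mu-\lambda)$. I would then invoke the index constraints: from $\lambda\geq 1$, $\mu\leq d$, and the strict inequality $\lambda<\mu$ one gets $1\leq \mu-\lambda\leq d-1$, hence $0<\mu-\lambda<d$. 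A strictly positive integer smaller than $d$ cannot be a multiple of $d$, contradicting $d\mid(\mu-\lambda)$. Therefore $\omega^{\lambda}\neq\omega^{\mu}$, establishing the first part.

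For the second claim I would simply compute $(\omega^{\tau})^{d}=\omega^{\tau d}=\bigl(e^{2\pi i/d}\bigr)^{\tau d}=e^{2\pi i\tau}$, and since $\tau$ is an integer the right-hand side equals $1$. No case distinction on $\tau$ within its range $1\leq\tau\leq d$ is needed, because integrality of $\tau$ alone suffices.

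The only step requiring any care — and the nearest thing to an obstacle in an otherwise routine argument — is the exponent bookkeeping in the first part: one must combine the strict inequality $\lambda<\mu$ with the endpoint bounds $\lambda\geq 1$ and $\mu\leq d$ to pin the difference into the open range $0<\mu-\lambda<d$, which is precisely what excludes divisibility by $d$. If the inequalities were not strict, or if $\mu$ were allowed to reach $d+\lambda$, the conclusion would fail, so this is the place where the hypotheses are genuinely used. Everything else is a direct appeal to the periodicity of $e^{2\pi i x}$.
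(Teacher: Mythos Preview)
Your argument is correct and entirely standard: both claims reduce to the elementary fact that $e^{2\pi i x}=1$ iff $x\in\mathbb{Z}$, and your bound $0<\mu-\lambda<d$ is exactly what is needed to rule out divisibility by $d$ in the first part. There is nothing to compare against, however, because the paper does not supply its own proof of this lemma; it is simply quoted from the cited textbook \cite{Gai2007} as a known preliminary, so your self-contained verification goes beyond what the paper itself provides.
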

\begin{lemma} (Kramer's rule \cite{Depart2014})
A system of equations
\begin{equation}
\nonumber
\left\{
\begin{aligned}
a_{11}x_{1}+a_{12}x_{2}+\cdots+a_{1n}x_{n}=b_{1}\\
a_{21}x_{1}+a_{22}x_{2}+\cdots+a_{2n}x_{n}=b_{2}\\
\vdots\qquad \qquad \qquad \qquad \qquad \\
a_{n1}x_{1}+a_{n2}x_{2}+\cdots+a_{nn}x_{n}=b_{n}\\
\end{aligned}
\right.
\end{equation}
has a unique solution if its coefficient determinant
\begin{equation}
\nonumber
\begin{split}
\left|
  \begin{array}{cccc}
    a_{11}   &a_{12}    &\cdots &a_{1n}\\
    a_{21}   &a_{22}    &\cdots &a_{2n}\\
    \vdots   &\vdots    &\ddots &\vdots\\
    a_{n1}   &a_{n2}    &\cdots &a_{nn}
  \end{array}
\right|
\neq 0.
\end{split}
\end{equation}
\end{lemma}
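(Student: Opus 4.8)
The plan is to recast the system in matrix form $A\mathbf{x}=\mathbf{b}$, where $A=(a_{ij})$ is the $n\times n$ coefficient matrix with columns $\mathbf{a}_1,\dots,\mathbf{a}_n$ and $\mathbf{b}=(b_1,\dots,b_n)^{T}$, and then to exploit the fact that the determinant is a multilinear, alternating function of these columns. Since the hypothesis is $\det A\neq 0$, my first step is to deduce that the columns $\mathbf{a}_1,\dots,\mathbf{a}_n$ are linearly independent: were they dependent, one column, say $\mathbf{a}_k$, would be a linear combination of the others, and the corresponding column operations (which leave the determinant unchanged) would reduce the $k$-th column to zero, forcing $\det A=0$ and contradicting the hypothesis. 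Hence the columns form a basis, so every right-hand side can be written as $\mathbf{b}=\sum_{j}x_j\mathbf{a}_j$ for suitable scalars; this already supplies existence of a solution and, together with the independence, its uniqueness.

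First I would settle uniqueness cleanly: if $A\mathbf{x}=A\mathbf{y}=\mathbf{b}$, then $A(\mathbf{x}-\mathbf{y})=0$, and the linear independence of the columns forces $\mathbf{x}=\mathbf{y}$. Next, to pin the solution down explicitly, I would introduce the matrix $A_i$ obtained from $A$ by replacing its $i$-th column with $\mathbf{b}$. Substituting $\mathbf{b}=\sum_{j}x_j\mathbf{a}_j$ and expanding $\det A_i$ by multilinearity in the $i$-th column gives $\det A_i=\sum_{j}x_j\det[\mathbf{a}_1,\dots,\mathbf{a}_j,\dots,\mathbf{a}_n]$, where $\mathbf{a}_j$ now occupies position $i$. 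Every term with $j\neq i$ has the column $\mathbf{a}_j$ appearing twice (once in its own slot, once in slot $i$), so by the alternating property it vanishes; only the $j=i$ term survives, yielding $\det A_i=x_i\det A$. Dividing through by the nonzero $\det A$ produces the closed form $x_i=\det A_i/\det A$.

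I expect the only genuinely delicate point to be the passage from $\det A\neq 0$ to the linear independence of the columns (equivalently, the invertibility of $A$); everything after that is a mechanical consequence of column-multilinearity and the vanishing of a determinant with two equal columns, both of which I would take as standard structural properties of the determinant quoted from the reference. To close the argument I would verify existence by direct substitution: inserting $x_i=\det A_i/\det A$ back into $A\mathbf{x}$ and running the same multilinear expansion in reverse recovers $\mathbf{b}$, confirming that the formula genuinely solves the system and thereby completing the proof that a unique solution exists whenever the coefficient determinant is nonzero.
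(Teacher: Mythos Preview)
Your argument is correct and in fact proves more than the lemma asserts: the statement only claims existence and uniqueness of a solution when the coefficient determinant is nonzero, whereas you derive the explicit Cramer formula $x_i=\det A_i/\det A$ as well.

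As for comparison with the paper: there is nothing to compare. The paper does not prove this lemma; it is listed in the Preliminaries section as a standard fact quoted from a linear algebra textbook (the reference \cite{Depart2014}), alongside similarly unproved statements about roots of unity and the Vandermonde determinant. These lemmas serve only as tools in the proof of Theorem~1, where the nonvanishing of certain Vandermonde determinants is combined with this lemma to conclude that the associated linear systems have unique solutions. So your proposal is sound, but the paper simply cites the result rather than arguing for it.
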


\begin{lemma} (Vandermonde determinant \cite{Department2014})
\begin{equation}
\begin{split}
D=
\left|
  \begin{array}{ccccc}
    1       &y_{1}   &y_{1}^{2} &\cdots &y_{1}^{n-1}\\
    1       &y_{2}   &y_{2}^{2} &\cdots &y_{2}^{n-1}\\
    1       &y_{3}   &y_{3}^{2} &\cdots &y_{3}^{n-1}\\
    \vdots  &\vdots  &\vdots    &\ddots &\vdots\\
    1       &y_{n}   &y_{n}^{2} &\cdots &y_{n}^{n-1}\\
  \end{array}
\right|
=\sum_{1\leq j< t\leq n}(y_{t}-y_{j}).
\end{split}
\end{equation}
\end{lemma}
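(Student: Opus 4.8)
The plan is to establish the formula by induction on the order $n$, peeling off one variable at a time with elementary column operations (the product $\prod_{1\le j<t\le n}(y_{t}-y_{j})$ on the right being the intended value). For the base case $n=2$, direct expansion gives $D=y_{2}-y_{1}=\prod_{1\le j<t\le 2}(y_{t}-y_{j})$, which matches the claim; so I would assume the identity holds for every Vandermonde determinant of order $n-1$.

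For the inductive step I would index the columns by the powers $0,1,\dots,n-1$ and apply the operations $C_{k}\to C_{k}-y_{1}C_{k-1}$ successively for $k=n-1,n-2,\dots,1$. Carrying them out in this high-to-low order ensures each one uses the still-unmodified column $C_{k-1}$, and since they are of the ``add a multiple of one column to another'' type the value of $D$ is unchanged. The $(i,k)$ entry then becomes $y_{i}^{k}-y_{1}y_{i}^{k-1}=y_{i}^{k-1}(y_{i}-y_{1})$; in particular every entry of the first row with $k\ge 1$ vanishes, so that row collapses to $(1,0,\dots,0)$.

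Expanding along this first row reduces $D$ to the single $(n-1)\times(n-1)$ minor with entries $y_{i}^{k-1}(y_{i}-y_{1})$ for $i=2,\dots,n$. Pulling the common factor $(y_{i}-y_{1})$ out of each row $i$ gives
\[
D=\prod_{i=2}^{n}(y_{i}-y_{1})\cdot
\begin{vmatrix}
1 & y_{2} & \cdots & y_{2}^{\,n-2}\\
\vdots & \vdots & \ddots & \vdots\\
1 & y_{n} & \cdots & y_{n}^{\,n-2}
\end{vmatrix},
\]
where the remaining determinant is exactly the order-$(n-1)$ Vandermonde determinant in $y_{2},\dots,y_{n}$. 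By the induction hypothesis it equals $\prod_{2\le j<t\le n}(y_{t}-y_{j})$, while the prefactor $\prod_{i=2}^{n}(y_{i}-y_{1})$ supplies precisely the remaining terms with $j=1$; together they yield $\prod_{1\le j<t\le n}(y_{t}-y_{j})$, completing the induction.

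The only delicate point is the bookkeeping of the column operations together with the index shift when the minor is re-read as a smaller Vandermonde matrix; everything else is mechanical. An alternative route is the polynomial-root argument: viewing $D$ as a polynomial in $y_{1},\dots,y_{n}$, it vanishes whenever two variables coincide, so each $(y_{t}-y_{j})$ divides it, and one fixes the overall constant by comparing the coefficient of the diagonal monomial $y_{2}y_{3}^{2}\cdots y_{n}^{\,n-1}$. Confirming that this constant is exactly $1$ still requires a degree count, which is why I regard the inductive column-reduction as the cleaner and more self-contained argument.
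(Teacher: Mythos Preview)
Your inductive column-reduction argument is the standard textbook proof and is correct. Note, however, that the paper does not supply its own proof of this lemma: it is stated as a preliminary fact and attributed to a linear algebra textbook \cite{Department2014}, so there is no in-paper argument to compare against. You also correctly read the right-hand side as the product $\prod_{1\le j<t\le n}(y_{t}-y_{j})$ rather than the sum $\sum_{1\le j<t\le n}(y_{t}-y_{j})$ that appears in the paper's statement; the latter is evidently a typographical slip, since the sum interpretation already fails at $n=3$.
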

$D\neq 0$ if and only if $y_{t}\neq y_{j}$ for $1\leq j< t\leq n$, where $j$, $t$ and $n$ are integers.

\section{Nonlocal sets of multipartite product states in $\otimes_{j=1}^{n}\mathbb{C}^{d}$ quantum system}
In this section, we give two different methods to construct nonlocal sets of orthogonal multipartite orthogonal states in $\otimes_{i=1}^{n}\mathbb{C}^{d}$. For convenience, all the product states in this paper are not normalized.  It should be  noted that some construction skills are inspired by Refs. \cite{Xu2016,HalderM2019}. For more details, the readers can read Refs. \cite{Xu2016,HalderM2019}.

In $\otimes_{j=1}^{n}\mathbb{C}^{d}$, we construct $2n(d-1)$ orthogonal product states
\begin{equation}
\nonumber
\begin{aligned}
|\phi_{t+1}\rangle=(\sum_{j=0}^{d-1}\omega^{tj}|j\rangle)_{1}|(d-1)\rangle_{2}
|0\rangle_{3}|0\rangle_{4}\cdots|0\rangle_{n},\\
|\phi_{t+d+1}\rangle=|0\rangle_{1}(\sum_{j=0}^{d-1}\omega^{tj}|j\rangle)_{2}
|(d-1)\rangle_{3}|0\rangle_{4}\cdots|0\rangle_{n},\\
\vdots\qquad\qquad\qquad\qquad\qquad\qquad\\
|\phi_{t+(\delta-1)d+1}\rangle=|0\rangle_{1}\cdots
(\sum_{j=0}^{d-1}\omega^{tj}|j\rangle)_{\delta}
|(d-1)\rangle_{\delta+1}\cdots|0\rangle_{n},\\
\vdots\qquad\qquad\qquad\qquad\qquad\qquad\\
|\phi_{t+(n-2)d+1}\rangle=|0\rangle_{1}\cdots|0\rangle_{n-2}
(\sum_{j=0}^{d-1}\omega^{tj}|j\rangle)_{n-1}|(d-1)\rangle_{n},\\
|\phi_{t+(n-1)d+1}\rangle=|(d-1)\rangle_{1}|0\rangle_{2}|0\rangle_{3}\cdots|0\rangle_{n-1}
(\sum_{j=0}^{d-1}\omega^{tj}|j\rangle)_{n},\\
|\phi_{nd+q}\rangle=(\sum_{j=0}^{d-1}\omega^{j}|j\rangle)_{1}|q\rangle_{2}
|0\rangle_{3}|0\rangle_{4}\cdots|0\rangle_{n},\\
|\phi_{nd+(d-2)+q}\rangle=|0\rangle_{1}(\sum_{j=0}^{d-1}\omega^{j}|j\rangle)_{2}|q\rangle_{3}
|0\rangle_{4}\cdots|0\rangle_{n},\\
\vdots\qquad\qquad\qquad\qquad\qquad\\
|\phi_{nd+(\delta-1)(d-2)+q}\rangle=|0\rangle_{1}\cdots(\sum_{j=0}^{d-1}\omega^{j}
|j\rangle)_{\delta}|q\rangle_{\delta+1}\cdots|0\rangle_{n},\\
\vdots\qquad\qquad\qquad\qquad\qquad\\
|\phi_{nd+(n-2)(d-2)+q}\rangle=|0\rangle_{1}\cdots
|0\rangle_{n-2}(\sum_{j=0}^{d-1}\omega^{j}|j\rangle)_{n-1}|q\rangle_{n},\\
|\phi_{nd+(n-1)(d-2)+q}\rangle=|q\rangle_{1}|0\rangle_{2}\cdots
|0\rangle_{n-1}(\sum_{j=0}^{d-1}\omega^{j}|j\rangle)_{n},\\
\end{aligned}
\end{equation}
where $d\geq2$; $\omega=e^{\frac{2\pi i}{d}}$, $i=\sqrt{-1}$; $3\leq\delta\leq n-2$; $q=1$, $2$, $\cdots$, $d-2$; and $t=0$, $1$, $\cdots$, $d-1.$

\begin{theorem}
In $\otimes_{j=1}^{n}\mathbb{C}^{d}$, the above $2n(d-1)$ orthogonal product states cannot be exactly discriminated by using only LOCC.
\end{theorem}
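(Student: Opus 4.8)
The plan is to prove local indistinguishability by the standard method of showing that the very first nontrivial measurement any party can perform must be trivial, i.e.\ proportional to the identity, so that no information can be extracted without disturbing the states. Concretely, a successful LOCC protocol must begin with some party performing a measurement with POVM elements $\{M_m^{\dagger}M_m\}$; for the protocol to preserve the orthogonality (and hence distinguishability) of the post-measurement states, each $M_m^{\dagger}M_m$ must act on the starting party's $\mathbb{C}^d$ as a matrix that does not disturb the set. I would show that this forces $M^{\dagger}M$ to be a scalar multiple of the identity, so the party learns nothing; by the symmetry of the construction the same holds for \emph{every} party, and therefore no party can go first, which is exactly the obstruction that makes the set locally indistinguishable.

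First I would fix an arbitrary party, say party $1$, and write a general $d\times d$ positive matrix $E=M^{\dagger}M$ in the basis $\{|0\rangle,\dots,|d-1\rangle\}$ with entries $a_{kl}$. The key requirement is that post-measurement states remain mutually orthogonal: for any two states $|\phi\rangle,|\psi\rangle$ in the set whose components on parties $2,\dots,n$ coincide, the condition $\langle\phi|E\otimes I\cdots\otimes I|\psi\rangle=0$ must hold whenever $\langle\phi|\psi\rangle=0$ originally. I would extract these constraints by pairing up states from the list that agree on all the ``spectator'' subsystems but differ only in party $1$'s vector. For the family $|\phi_{t+1}\rangle$ (indexed by $t=0,\dots,d-1$) the spectator parts are identical, so orthogonality after measurement demands $(\sum_k \omega^{t k}\langle k|)\,E\,(\sum_l \omega^{t' l}|l\rangle)=0$ for all $t\neq t'$. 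Using Lemma~1 (the distinctness of the roots of unity $\omega^\tau$) together with the Vandermonde structure of Lemma~3, this system of $d(d-1)$ homogeneous equations in the off-diagonal entries $a_{kl}$ has only the trivial solution, forcing every $a_{kl}=0$ for $k\neq l$; this is the step where Kramer's rule (Lemma~2) and the nonvanishing Vandermonde determinant do the real work.

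Having forced $E$ to be diagonal, I would then bring in the second family $|\phi_{nd+\cdots+q}\rangle$, whose party-$1$ vector is $\sum_j\omega^{j}|j\rangle$ with the spectator subsystem carrying the label $|q\rangle$ on party $2$. Pairing these states against the earlier ones (and against each other across differing $q$) produces equations of the form $a_{kk}=a_{ll}$ that chain all the diagonal entries together, yielding $E=c\,I$ for some constant $c\geq 0$. This shows party $1$'s measurement is trivial. By the cyclic symmetry of the construction --- each party $\delta$ plays exactly the same structural role, carrying a superposition $\sum_j\omega^{tj}|j\rangle$ on its own subsystem while the ``$(d-1)$'' marker and the spectator $|0\rangle$'s rotate around --- the identical argument applies verbatim to every party.

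The main obstacle I anticipate is bookkeeping rather than conceptual: one must carefully identify, for each party, \emph{which} pairs of states in the long indexed list share identical spectator components so that the orthogonality constraints are genuinely extracted, and then verify that the resulting linear systems for the off-diagonal and diagonal entries are exactly the ones whose coefficient matrices are (generalized) Vandermonde and hence nonsingular by Lemma~3. The boundary parties (party $1$ and party $n$, which carry the ``wrap-around'' $|(d-1)\rangle$ in the last line of each family) require a slightly separate check from the generic interior party $\delta$, but the root-of-unity argument is uniform. Once triviality of the first measurement is established for all parties, the standard conclusion follows: since no party can extract any information by going first without irreversibly destroying the orthogonality of the set, the $2n(d-1)$ states cannot be perfectly discriminated by LOCC.
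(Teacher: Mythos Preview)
Your overall plan is right, but the first step contains a real error. The constraints $\langle\chi_t|E|\chi_{t'}\rangle=0$ for $t\neq t'$, where $|\chi_t\rangle=\sum_j\omega^{tj}|j\rangle$, do \emph{not} force $E$ to be diagonal in the computational basis; they force $E$ to be diagonal in the Fourier basis, i.e.\ $E$ is merely \emph{circulant}. (Concretely, $E=I+\epsilon(S+S^{\dagger})$ with $S$ the cyclic shift is positive for small $\epsilon$, has all Fourier inner products $\langle\chi_t|E|\chi_{t'}\rangle=0$ for $t\neq t'$, yet is not diagonal.) Those $d(d-1)$ equations are not homogeneous in the off-diagonal entries $a_{kl}$ alone---the diagonal entries appear as well---so Cramer's rule/Vandermonde does not give $a_{kl}=0$ from this family by itself.

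The paper's proof fixes this by \emph{first} using pairs of states in which party~1 sits in computational basis vectors with overlapping spectator parts, namely $|\phi_{(n-2)d+2}\rangle$ vs.\ $|\phi_{(n-1)d+2}\rangle$ (giving $m_{0(d-1)}=m_{(d-1)0}=0$) and $|\phi_{(n-1)d+2}\rangle$ vs.\ $|\phi_{nd+(n-1)(d-2)+q}\rangle$ (giving $m_{(d-1)q}=m_{q(d-1)}=0$), so that the entire last row and column of $E$ vanish off the diagonal. Only \emph{then} are the Fourier constraints from $\{|\phi_{t+1}\rangle\}$ invoked; with $m_{j(d-1)}$ already known, the resulting $d$ linear systems have Vandermonde coefficient matrices and yield $m_{jp}=0$ ($j\neq p$) and $m_{pp}=m_{(d-1)(d-1)}$. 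Your ``second step'' as written also misfires: the states $|\phi_{nd+q}\rangle$ all carry the \emph{same} party-1 vector $\sum_j\omega^{j}|j\rangle$, and their spectator part $|q\rangle_2$ is orthogonal to the $|(d-1)\rangle_2$ of the first family, so they produce no constraint on $E$. The relevant states are those of the last line of the second family, $|\phi_{nd+(n-1)(d-2)+q}\rangle$, where party~1 is in $|q\rangle$. Reordering your argument to kill the $(d-1)$-row/column first, and pairing the correct states, makes the proof go through.
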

\begin{proof}
From the structure of the set of these $2n(d-1)$ states, we know that each of $n$ parties faces the same case when they try to discriminate these states. Thus we just need to show that the first party can only perform a trivial measurement in order to preserve the orthogonality of the post-measurement states.

Suppose that the first party starts with an orthogonal preserving and nontrivial measurement $\{M_{k}^{\dag}M_{k}: k=1$, $2$, $\cdots$, $l\}$, \emph{i.e.}, the post-measurement states should keep orthogonality and  not all the positive operator-valued measure elements $M_{k}^{\dag}M_{k}$ are proportional to identity operator. Without loss of generality, assume
\begin{equation}
\begin{split}
M_{k}^{\dag}M_{k}=
\left[
  \begin{array}{ccccc}
    m_{00}^{k}      &m_{01}^{k}    &\cdots &m_{0(d-1)}^{k}\\
    m_{10}^{k}     &m_{11}^{k}     &\cdots &m_{1(d-1)}^{k}\\
    \vdots         &\vdots         &\ddots &\vdots\\
    m_{(d-1)0}^{k} &m_{(d-1)1}^{k} &\cdots &m_{(d-1)(d-1)}^{k}\\
  \end{array}
\right]
\end{split}\nonumber
\end{equation}
under the basis $\{|0\rangle$, $|1\rangle$, $\cdots$, $|(d-1)\rangle\}$, where $k=1$, $2$, $\cdots,$ $l$.

Because the post-measurement states $(M_{k}\otimes I_{d\times d}\otimes \cdots \otimes I_{d\times d}) |\phi_{(n-2)d+2}\rangle$ and $(M_{k}\otimes I_{d\times d}\otimes\cdots \otimes I_{d\times d})|\phi_{(n-1)d+2}\rangle$ should be orthogonal, we have
\begin{equation}
\left\{
\begin{aligned}
\langle0|(M_{k}^{\dag}M_{k}|(d-1)\rangle_{1}=m_{0(d-1)}^{k}=0\\
\langle(d-1)|(M_{k}^{\dag}M_{k}|0\rangle_{1}=m_{(d-1)0}^{k}=0
\end{aligned}
\right..
\end{equation}
Similarly, since the post-measurement states $(M_{k}\otimes I_{d\times d}\otimes \cdots \otimes I_{d\times d}) |\phi_{(n-1)d+2}\rangle$ and $(M_{k}\otimes I_{d\times d}\otimes\cdots\otimes I_{d\times d})|\phi_{nd+(n-1)(d-2)+q}\rangle$ should be orthogonal for $q=1,2,3,\cdots,d-2$, we get
\begin{equation}
\left\{
\begin{aligned}
\langle(d-1)|M_{k}^{\dag}M_{k}|q\rangle_{1}=m_{(d-1)q}^{k}=0\\
\langle q|M_{k}^{\dag}M_{k}|(d-1)\rangle_{1}=m_{q(d-1)}^{k}=0
\end{aligned}
\right.
\end{equation}
for $q=1$, $2$, $3$, $\cdots$, $d-2$.

Because each element of the set $\{(M_{k}\otimes I_{d\times d}\otimes \cdots\otimes I_{d\times d})|\phi_{t+1}\rangle:$ $t=0,$ $1$, $\cdots$, $d-1\}$ is orthogonal to the others, we have
\begin{equation}
\nonumber
\left\{
\begin{aligned}
\langle\phi_{1}|(M_{k}^{\dag}M_{k}\otimes I_{d\times d}\otimes\cdots\otimes I_{d\times d})|\phi_{2}\rangle=0\\
\langle\phi_{1}|(M_{k}^{\dag}M_{k}\otimes I_{d\times d}\otimes\cdots\otimes I_{d\times d})|\phi_{3}\rangle=0\\
\vdots\qquad\qquad\qquad\qquad\\
\langle\phi_{1}|(M_{k}^{\dag}M_{k}\otimes I_{d\times d}\otimes\cdots\otimes I_{d\times d})|\phi_{d}\rangle=0
\end{aligned}
\right.,
\end{equation}
\begin{equation}
\nonumber
\left\{
\begin{aligned}
\langle\phi_{2}|(M_{k}^{\dag}M_{k}\otimes I_{d\times d}\otimes\cdots\otimes I_{d\times d})|\phi_{1}\rangle=0\\
\langle\phi_{2}|(M_{k}^{\dag}M_{k}\otimes I_{d\times d}\otimes\cdots\otimes I_{d\times d})|\phi_{3}\rangle=0\\
\langle\phi_{2}|(M_{k}^{\dag}M_{k}\otimes I_{d\times d}\otimes\cdots\otimes I_{d\times d})|\phi_{4}\rangle=0\\
\vdots\qquad\qquad\qquad\qquad\\
\langle\phi_{2}|(M_{k}^{\dag}M_{k}\otimes I_{d\times d}\otimes\cdots\otimes I_{d\times d})|\phi_{d}\rangle=0
\end{aligned}
\right.,
\end{equation}
\begin{equation}
\nonumber
\left\{
\begin{aligned}
\langle\phi_{3}|(M_{k}^{\dag}M_{k}\otimes I_{d\times d}\otimes\cdots\otimes I_{d\times d})|\phi_{1}\rangle=0\\
\langle\phi_{3}|(M_{k}^{\dag}M_{k}\otimes I_{d\times d}\otimes\cdots\otimes I_{d\times d})|\phi_{2}\rangle=0\\
\langle\phi_{3}|(M_{k}^{\dag}M_{k}\otimes I_{d\times d}\otimes\cdots\otimes I_{d\times d})|\phi_{4}\rangle=0\\
\langle\phi_{3}|(M_{k}^{\dag}M_{k}\otimes I_{d\times d}\otimes\cdots\otimes I_{d\times d})|\phi_{5}\rangle=0\\
\vdots\qquad\qquad\qquad\qquad\qquad\\
\langle\phi_{3}|(M_{k}^{\dag}M_{k}\otimes I_{d\times d}\otimes\cdots\otimes I_{d\times d})|\phi_{d}\rangle=0
\end{aligned}
\right.,
\end{equation}
\begin{equation}
\nonumber
\begin{aligned}
\vdots\quad\quad\\
\end{aligned}
\end{equation}
\begin{equation}
\nonumber
\left\{
\begin{aligned}
\langle\phi_{d}|(M_{k}^{\dag}M_{k}\otimes I_{d\times d}\otimes\cdots\otimes I_{d\times d})|\phi_{1}\rangle=0\quad\\
\langle\phi_{d}|(M_{k}^{\dag}M_{k}\otimes I_{d\times d}\otimes\cdots\otimes I_{d\times d})|\phi_{2}\rangle=0\quad\\
\langle\phi_{d}|(M_{k}^{\dag}M_{k}\otimes I_{d\times d}\otimes\cdots\otimes I_{d\times d})|\phi_{3}\rangle=0\quad\\
\vdots\qquad\qquad\qquad\qquad\qquad\\
\langle\phi_{d}|(M_{k}^{\dag}M_{k}\otimes I_{d\times d}\otimes\cdots\otimes I_{d\times d})|\phi_{d-1}\rangle=0
\end{aligned}
\right.,
\end{equation}

Thus, we have the following $d$ systems of equations
\begin{equation}
\nonumber
\left\{
\begin{aligned}
\sum_{p=0}^{d-2}(\omega^{p}\sum_{j=0}^{d-1}m_{jp}^{k})
=-\omega^{d-1}\sum_{j=0}^{d-1}m_{j(d-1)}^{k}\qquad\qquad\\
\sum_{p=0}^{d-2}[(\omega^{2})^{p}\sum_{j=0}^{d-1}m_{jp}^{k}]
=-(\omega^{2})^{d-1}\sum_{j=0}^{d-1}m_{j(d-1)}^{k}\qquad\\
\vdots\qquad \qquad \qquad \qquad \qquad \\
\sum_{p=0}^{d-2}[(\omega^{d-1})^{p}\sum_{j=0}^{d-1}m_{jp}^{k}]
=-(\omega^{d-1})^{d-1}\sum_{j=0}^{d-1}m_{j(d-1)}^{k}
\end{aligned}
\right.,
\end{equation}

\begin{equation}
\nonumber
\left\{
\begin{aligned}
\sum_{p=0}^{d-2}(\sum_{j=0}^{d-1}\overline{\omega}^{j}m_{jp}^{k})
=-\sum_{j=0}^{d-1}\overline{\omega}^{j}m_{j(d-1)}^{k}\qquad\qquad\qquad\quad\,\,\,\,\\
\sum_{p=0}^{d-2}[(\omega^{2})^{p}\sum_{j=0}^{d-1}\overline{\omega}^{j}m_{jp}^{k}]
=-(\omega^{2})^{d-1}\sum_{j=0}^{d-1}\overline{\omega}^{j}m_{j(d-1)}^{k}\qquad\\
\sum_{p=0}^{d-2}[(\omega^{3})^{p}\sum_{j=0}^{d-1}\overline{\omega}^{j}m_{jp}^{k}]
=-(\omega^{3})^{d-1}\sum_{j=0}^{d-1}\overline{\omega}^{j}m_{j(d-1)}^{k}\qquad\\
\vdots \qquad \qquad \qquad \qquad \qquad \qquad \qquad \\
\sum_{p=0}^{d-2}[(\omega^{d-1})^{p}\sum_{j=0}^{d-1}\overline{\omega}^{j}m_{jp}^{k}]
=-(\omega^{d-1})^{d-1}\sum_{j=0}^{d-1}\overline{\omega}^{j}m_{j(d-1)}^{k}
\end{aligned}
\right.,
\end{equation}

\begin{widetext}
\begin{equation}
\nonumber
\left\{
\begin{aligned}
\sum_{p=0}^{d-2}[\sum_{j=0}^{d-1}(\overline{\omega}^{2})^{j}m_{jp}^{k}]=-\sum_{j=0}^{d-1}
(\overline{\omega}^{2})^{j}m_{j(d-1)}^{k}\qquad\qquad\qquad\quad\,\,\,\,\,\,\\
\sum_{p=0}^{d-2}[\omega^{p}\sum_{j=0}^{d-1}(\overline{\omega}^{2})^{j}m_{jp}^{k}]
=-\omega^{d-1}\sum_{j=0}^{d-1}(\overline{\omega}^{2})^{j}m_{j(d-1)}^{k}\qquad\qquad\,\,\,\\
\sum_{p=0}^{d-2}[(\omega^{3})^{p}\sum_{j=0}^{d-1}(\overline{\omega}^{2})^{j}m_{jp)}^{k}]
=-(\omega^{3})^{d-1}\sum_{j=0}^{d-1}(\overline{\omega}^{2})^{j}m_{j(d-1)}^{k}\qquad\\
\vdots \qquad \qquad \qquad \qquad \qquad \qquad \qquad \\
\sum_{p=0}^{d-2}[(\omega^{d-1})^{p}\sum_{j=0}^{d-1}(\overline{\omega}^{2})^{j}m_{jp}^{k}]
=-(\omega^{d-1})^{d-1}\sum_{j=0}^{d-1}(\overline{\omega}^{2})^{j}m_{j(d-1)}^{k}\\
\end{aligned}
\right.,
\end{equation}
\begin{equation}
\nonumber
\vdots
\end{equation}

\begin{equation}
\nonumber
\left\{
\begin{aligned}
\sum_{p=0}^{d-2}[\sum_{j=0}^{d-1}(\overline{\omega}^{d-1})^{j}m_{jp}^{k}]
=-\sum_{j=0}^{d-1}(\overline{\omega}^{d-1})^{j}m_{j(d-1)}^{k}\qquad\qquad\qquad
\quad\,\,\,\,\,\,\\
\sum_{p=0}^{d-2}[\omega^{p}\sum_{j=0}^{d-1}(\overline{\omega}^{d-1})^{j}m_{jp}^{k}]
=-\omega^{d-1}\sum_{j=0}^{d-1}(\overline{\omega}^{d-1})^{j}m_{j(d-1)}^{k}\qquad\qquad\,\,\,\\
\sum_{p=0}^{d-2}[(\omega^{2})^{p}\sum_{j=0}^{d-1}(\overline{\omega}^{d-1})^{j}m_{jp}^{k}]
=-(\omega^{2})^{d-1}\sum_{j=0}^{d-1}(\overline{\omega}^{d-1})^{j}m_{j(d-1)}^{k}\qquad\\
\vdots \qquad \qquad \qquad \qquad \qquad \qquad \qquad \\
\sum_{p=0}^{d-2}[(\omega^{d-2})^{p}\sum_{j=0}^{d-1}(\overline{\omega}^{d-1})^{j}m_{jp}^{k}]
=-(\omega^{d-2})^{d-1}\sum_{j=0}^{d-1}(\overline{\omega}^{d-1})^{j}m_{j(d-1)}^{k}\\
\end{aligned}
\right.,
\end{equation}
\end{widetext}
where $\overline{\omega}$ is the complex conjugate of $\omega$.
The coefficient determinants of the above $d$ systems of equations are as follows.
\begin{equation}
\nonumber
\begin{split}
D_{1}=
\left|
  \begin{array}{ccccc}
    1        &\omega         &\omega^{2}        &\cdots &\omega^{d-2}\\
    1        &\omega^{2}     &(\omega^{2})^{2}  &\cdots &(\omega^{2})^{d-2}\\
    1        &\omega^{3}  &(\omega^{3})^{2}    &\cdots &(\omega^{3})^{d-2}\\
    \vdots   &\vdots        &\vdots                &\ddots &\vdots\\
    1        &\omega^{d-1}  &(\omega^{d-1})^{2}    &\cdots &(\omega^{d-1})^{d-2}
  \end{array}
\right|,
\end{split}
\end{equation}
\begin{equation}
\nonumber
\begin{split}
D_{2}=
\left|
  \begin{array}{ccccc}
    1        &1         &1        &\cdots &1\\
    1        &\omega^{2}     &(\omega^{2})^{2}  &\cdots &(\omega^{2})^{d-2}\\
    1        &\omega^{3}  &(\omega^{3})^{2}    &\cdots &(\omega^{3})^{d-2}\\
    \vdots   &\vdots        &\vdots                &\ddots &\vdots\\
    1        &\omega^{d-1}  &(\omega^{d-1})^{2}    &\cdots &(\omega^{d-1})^{d-2}
  \end{array}
\right|,
\end{split}
\end{equation}
\begin{equation}
\nonumber
\begin{split}
D_{3}=
\left|
  \begin{array}{ccccc}
    1        &1         &1        &\cdots &1\\
    1        &\omega     &\omega^{2}  &\cdots &\omega^{d-2}\\
    1        &\omega^{3}  &(\omega^{3})^{2}    &\cdots &(\omega^{3})^{d-2}\\
    \vdots   &\vdots        &\vdots                &\ddots &\vdots\\
    1        &\omega^{d-1}  &(\omega^{d-1})^{2}    &\cdots &(\omega^{d-1})^{d-2}
  \end{array}
\right|,
\end{split}
\end{equation}
\begin{equation}
\nonumber
\vdots\qquad\\
\end{equation}
\begin{equation}
\nonumber
\begin{split}
D_{d}=
\left|
  \begin{array}{ccccc}
    1        &1         &1        &\cdots &1\\
    1        &\omega     &\omega^{2}  &\cdots &\omega^{d-2}\\
    1        &\omega^{2}  &(\omega^{2})^{2}    &\cdots &(\omega^{2})^{d-2}\\
    \vdots   &\vdots        &\vdots                &\ddots &\vdots\\
    1        &\omega^{d-2}  &(\omega^{d-2})^{2}    &\cdots &(\omega^{d-2})^{d-2}
  \end{array}
\right|,
\end{split}
\end{equation}
It should be noted that $D_{j}$ is a Vandermonde determinant and we can easily get $D_{j}\neq 0$ for $j=1$, $2$, $\cdots$, $d$ by Lemma 1 and Lemma 3. By Lemma 2, Eqs. (2) and Eqs. (3), we have the unique solution for each of the above $d$ systems of equations:
\begin{equation}
\left\{
\begin{aligned}
\sum_{j=0}^{d-1}m_{jp}^{k}=m_{(d-1)(d-1)}^{k}\qquad\qquad\qquad\,\,\,\\
\sum_{j=0}^{d-1}\overline{\omega}^{j}m_{jp}^{k}=\overline{\omega}^{p}m_{(d-1)(d-1)}^{k}
\qquad\qquad\,\,\\
\sum_{j=0}^{d-1}(\overline{\omega}^{2})^{j}m_{jp}^{k}=
(\overline{\omega}^{2})^{p}m_{(d-1)(d-1)}^{k}\qquad\\
\vdots \qquad \qquad \qquad \\
\sum_{j=0}^{d-1}(\overline{\omega}^{d-1})^{j}m_{jp}^{k}=(\overline{\omega}^{d-1})^{p}
m_{(d-1)(d-1)}^{k}
\end{aligned}
\right.,
\end{equation}
where $p=0$, $1$, $2$, $\cdots$, $d-2.$ 

By Lemma 1-3, we have the unique solution of Eqs. (4),
\begin{equation}
\left\{
\begin{aligned}
m_{jp}^{k}=0\qquad\qquad\,\,\,\\
m_{pp}^{k}=m_{(d-1)(d-1)}^{k}\\
\end{aligned}
\right.
\end{equation}
for $p=0,1,\cdots,d-2$; $j=0,1,\cdots,d-1$, and $j\neq p.$

By Eqs. (2), (3) and (5), we have
\begin{equation}
\begin{split}
M_{k}^{\dag}M_{k}=
\left[
  \begin{array}{cccc}
    m_{(d-1)(d-1)}^{k}   &0          &\cdots   &0\\
    0            &m_{(d-1)(d-1)}^{k} &\cdots   &0\\
    0            &0          &\ddots   &0\\
    0            &0          &\cdots   &m_{(d-1)(d-1)}^{k}\\
  \end{array}
\right]
\end{split}\nonumber
\end{equation}
for $k=1,2,\cdots,l.$ This means that the first party cannot get any useful information to identify these states since all the POVM elements are proportional to identity operator. Thus, the first party cannot exactly discriminate these states by LOCC.

It is easy to see that each of these parties will face the same situation because of the
symmetry of the set of those states. Therefore, these product states cannot be perfectly distinguished by using only LOCC. This completes the proof.
\end{proof}

Based on the structure of the set of the states in Theorem 1, we construct a nonlocal set of orthogonal product states with smaller elements in the same Hilbert space as follow.

\begin{theorem}
In $\otimes_{j=1}^{n}\mathbb{C}^{d}$, the following $n(2d-3)+1$ orthogonal product states cannot be exactly discriminated by using only LOCC.
\end{theorem}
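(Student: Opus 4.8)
The plan is to transplant the architecture of the proof of Theorem 1 almost verbatim: by the permutation symmetry of the construction it suffices to show that any orthogonality-preserving POVM $\{M_k^{\dagger}M_k\}$ used by the first party must have every element proportional to the identity. Throughout I write $A:=M_k^{\dagger}M_k$ with entries $m_{jj'}^{k}$ in the computational basis $\{|0\rangle,\ldots,|(d-1)\rangle\}$, and abbreviate the (unnormalized) Fourier vectors $|f_t\rangle:=\sum_{j=0}^{d-1}\omega^{tj}|j\rangle$, for which $\langle f_{t'}|f_t\rangle=d\,\delta_{tt'}$.

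First I would fix the reduced set explicitly: keep the whole second family, retain in the first family only the states with $t=1,\ldots,d-1$ (discarding the $n$ uniform states $|\phi_{(\delta-1)d+1}\rangle$ corresponding to $t=0$, one for each $\delta$), and append a single stopper $|\Psi\rangle=(\sum_{j=0}^{d-1}|j\rangle)_1(\sum_{j=0}^{d-1}|j\rangle)_2\cdots(\sum_{j=0}^{d-1}|j\rangle)_n$, whose factor on every party is $|f_0\rangle$. A one-line check shows $|\Psi\rangle$ is orthogonal to each retained state, because every such state carries a factor $|f_t\rangle$ with $t\geq 1$ on some party $\delta$ and $\langle f_0|f_t\rangle=0$ there; this keeps the family orthogonal and reproduces the advertised count $n(d-1)+n(d-2)+1=n(2d-3)+1$.

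The off-diagonal vanishing conditions then come out exactly as in Theorem 1. The pairs responsible there — for instance $|\phi_{(n-2)d+2}\rangle$ with $|\phi_{(n-1)d+2}\rangle$, and $|\phi_{(n-1)d+2}\rangle$ with $|\phi_{nd+(n-1)(d-2)+q}\rangle$ — all use $t=1$ and so survive the pruning, whence orthogonality of the post-measurement states again forces $m_{0(d-1)}^{k}=m_{(d-1)0}^{k}=0$ and $m_{(d-1)q}^{k}=m_{q(d-1)}^{k}=0$ for $q=1,\ldots,d-2$. The delicate point is the Fourier-diagonal conditions. Mutual orthogonality inside the $\delta=1$ first family now yields only $\langle f_{t'}|A|f_t\rangle=0$ for $t,t'\in\{1,\ldots,d-1\}$ with $t\neq t'$; the equations pairing the uniform vector $|f_0\rangle$ with the others, previously supplied by the discarded $t=0$ state, are exactly what is missing. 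This is where the stopper earns its keep: demanding that the post-measurement $(M_k\otimes I\otimes\cdots\otimes I)|\Psi\rangle$ be orthogonal to $(M_k\otimes I\otimes\cdots\otimes I)|\phi_{t+1}\rangle$ collapses, since the surviving overlaps $\langle (d-1)|f_0\rangle$ and $\langle 0|f_0\rangle$ on the untouched parties both equal $1$, to the single scalar equation $\langle f_t|A|f_0\rangle=0$ for $t=1,\ldots,d-1$, and Hermiticity of $A$ hands back the conjugates $\langle f_0|A|f_t\rangle=0$.

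Finally I would package these relations into the same $d$ Vandermonde systems that appear in Theorem 1, with one equation of each system now originating from the stopper rather than from the $t=0$ state, apply Lemmas 1--3 to conclude that every coefficient determinant is nonzero, and read off the unique solution $m_{jp}^{k}=0$ for $j\neq p$ together with $m_{pp}^{k}=m_{(d-1)(d-1)}^{k}$, i.e.\ $A=m_{(d-1)(d-1)}^{k}I$. Equivalently, the collected conditions say that $A$ is diagonal in the Fourier basis, hence a circulant matrix, while the off-diagonal zeros annihilate every nonzero shift of its generating sequence, leaving $A\propto I$; by symmetry the same conclusion holds for each party, so the set is locally indistinguishable. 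The main obstacle I anticipate is bookkeeping rather than conceptual: verifying that all the overlaps on the untouched parties entering the stopper equations are nonzero, so that $|\Psi\rangle$ genuinely restores the lost equation instead of contributing a degenerate one, and confirming the orthogonality of $|\Psi\rangle$ with the second family for every $\delta$.
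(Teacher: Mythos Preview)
Your proposal is correct and follows essentially the same route as the paper's own argument: reduce by cyclic symmetry to the first party, use the surviving $t=1$ pairs to kill the last row and column of $M_k^{\dagger}M_k$, and let the stopper $|\Psi\rangle$ supply the missing $\langle f_t|A|f_0\rangle=0$ equations so that the same Vandermonde systems from Theorem~1 force $M_k^{\dagger}M_k\propto I$. Your write-up is in fact more detailed than the paper's sketch (which merely points to the set $\{|\phi_t\rangle: t=1,\ldots,d-1\}\cup\{|\phi_{n(2d-3)+1}\rangle\}$ without spelling out the overlap computation), and your circulant-matrix reformulation is a nice bonus.
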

\begin{widetext}
\begin{equation}
\nonumber
\begin{aligned}
|\phi_{t}\rangle=(\sum_{j=0}^{d-1}\omega^{tj}|j\rangle)_{1}|(d-1)\rangle_{2}
|0\rangle_{3}|0\rangle_{4}\cdots|0\rangle_{n},\\
|\phi_{t+(d-1)}\rangle=|0\rangle_{1}(\sum_{j=0}^{d-1}\omega^{tj}|j\rangle)_{2}
|(d-1)\rangle_{3}|0\rangle_{4}\cdots|0\rangle_{n},\\
\vdots\qquad\qquad\qquad\qquad\qquad\qquad\\
|\phi_{t+(\delta-1) (d-1)}\rangle=|0\rangle_{1}\cdots(\sum_{j=0}^{d-1}\omega^{tj}
|j\rangle)_{\delta}|(d-1)\rangle_{\delta+1}\cdots|0\rangle_{n},\\
\vdots\qquad\qquad\qquad\qquad\qquad\qquad\\
|\phi_{t+(n-2)(d-1)}\rangle=|0\rangle_{1}|0\rangle_{2}\cdots|0\rangle_{n-2}
(\sum_{j=0}^{d-1}\omega^{tj}|j\rangle)_{n-1}|(d-1)\rangle_{n},\\
|\phi_{t+(n-1)(d-1)}\rangle=|(d-1)\rangle_{1}|0\rangle_{2}|0\rangle_{3}\cdots
|0\rangle_{n-1}(\sum_{j=0}^{d-1}\omega^{tj}|j\rangle)_{n},\\
|\phi_{n(d-1)+q}\rangle=(\sum_{j=0}^{d-1}\omega^{j}|j\rangle)_{1}|q\rangle_{2}
|0\rangle_{3}|0\rangle_{4}\cdots|0\rangle_{n},\\
|\phi_{n(d-1)+(d-2)+q}\rangle=|0\rangle_{1}(\sum_{j=0}^{d-1}\omega^{j}|j\rangle)_{2}|q\rangle_{3}
|0\rangle_{4}\cdots|0\rangle_{n},\\
\vdots\qquad\qquad\qquad\qquad\qquad\qquad\\
\end{aligned}
\end{equation}
\begin{equation}
\nonumber
\begin{aligned}
|\phi_{n(d-1)+(\delta-1)(d-2)+q}\rangle=|0\rangle_{1}\cdots(\sum_{j=0}^{d-1}\omega^{j}
|j\rangle)_{\delta}|q\rangle_{\delta+1}\cdots|0\rangle_{n},\\
\vdots\qquad\qquad\qquad\qquad\qquad\qquad\\
|\phi_{n(d-1)+(n-2)(d-2)+q}\rangle=|0\rangle_{1}\cdots|0\rangle_{n-2}(\sum_{j=0}^{d-1}
\omega^{j}|j\rangle)_{n-1}|q\rangle_{n},\\
|\phi_{n(d-1)+(n-1)(d-2)+q}\rangle=|q\rangle_{1}|0\rangle_{2}\cdots|0\rangle_{n-1}(\sum_{j=0}^{d-1}\omega^{j}
|j\rangle)_{n},\\
|\phi_{n(2d-3)+1}\rangle=(\sum_{j=0}^{d-1}|j\rangle)_{1}(\sum_{j=0}^{d-1}
|j\rangle)_{2}\cdots(\sum_{j=0}^{d-1}|j\rangle)_{n},
\end{aligned}
\end{equation}
\end{widetext}
where $\omega=e^{\frac{2\pi i}{d}}$, $i=\sqrt{-1}$; $d\geq2$; $3\leq\delta\leq n-2$; $q=1$, $2$, $\cdots$, $d-2$; and $t=1$, $2$, $\cdots$, $d-1.$

Here we only give the proof idea. We can prove Theorem 2 by the same way that we use to prove Theorem 1. As the proof of Theorem 1, we only need to prove that every party can only make a trivial measurement in order to preserve the orthogonality of the post-measurement quantum states. That is, any two states, which are orthogonal only on one side, are still orthogonal on this side. Without loss of generality, we suppose the first party starts with a general measurement of preserving orthogonality. Here we assume that the POVM elements that he performs are $\{M_{k}^{\dag}M_{k}:\,k=1,\,2,\,\cdots,\,l\}$, where
\begin{equation}
\begin{split}
M_{k}^{\dag}M_{k}=
\left[
  \begin{array}{ccccc}
    m_{00}^{k}      &m_{01}^{k}    &\cdots &m_{0(d-1)}^{k}\\
    m_{10}^{k}     &m_{11}^{k}     &\cdots &m_{1(d-1)}^{k}\\
    \vdots         &\vdots         &\ddots &\vdots\\
    m_{(d-1)0}^{k} &m_{(d-1)1}^{k} &\cdots &m_{(d-1)(d-1)}^{k}\\
  \end{array}
\right]
\end{split}\nonumber
\end{equation}
under the basis $\{|0\rangle$, $|1\rangle$, $\cdots$, $|(d-1)\rangle\}$.

We can get
\begin{equation}
\begin{aligned}
m_{0(d-1)}^{k}=m_{(d-1)0}^{k}=0
\end{aligned}
\end{equation}
by the orthogonality of the post-measurement states $(M_{k}\otimes I_{d\times d}\otimes \cdots \otimes I_{d\times d}) |\phi_{(n-2)(d-1)+1}\rangle$ and $(M_{k}\otimes I_{d\times d}\otimes \cdots \otimes I_{d\times d}) |\phi_{(n-1)(d-1)+1}\rangle$. On the other hand, we can get
\begin{equation}
\begin{aligned}
m_{(d-1)q}^{k}=m_{q(d-1)}^{k}=0\\
\end{aligned}
\end{equation}
for $q=1$, $2$, $\cdots$, $d-2$ by the orthogonality of the post-measurement states $(M_{k}\otimes I_{d\times d}\otimes$ $\cdots $ $\otimes$ $ I_{d\times d}) |\phi_{(n-1)(d-1)+1}\rangle$ and $(M_{k}\otimes I_{d\times d}\otimes\cdots\otimes I_{d\times d})|\phi_{n(d-1)+(n-1)(d-2)+q}\rangle$. Furthermore, because each state of the set $\{(M_{k}\otimes I_{d\times d}\otimes \cdots\otimes I_{d\times d})|\phi_{t}\rangle:$ $t=1$, $2$, $\cdots$, $d-1$,  $n(2d-3)+1\}$, we can get
\begin{equation}
\left\{
\begin{aligned}
m_{jp}^{k}=0\qquad\qquad\,\,\,\\
m_{pp}^{k}=m_{(d-1)(d-1)}^{k}\\
\end{aligned}
\right.
\end{equation}
for $p=0,1,\cdots,d-2$; $j=0,1,\cdots,d-1$, and $j\neq p.$ Thus, we know that $M_{k}^{\dag}M_{k}$ is proportional to the unit operator by Eqs. (6), (7) and (8). This means that the first party only can perform a trivial measurement to preserve the orthogonality of the post-measurement states. So does anyone of the other parties. Therefore, these states cannot be reliably indistinguishable by using only LOCC.
\section{Nonlocal sets of product states in arbitrary multipartite quantum system}
In this section, we generalize our construction methods of locally indistinguishable orthogonal product states to arbitrary multipartite quantum system.
\begin{theorem}
In $\otimes_{j=1}^{n}\mathbb{C}^{d_{j}}$, the following $\sum_{j=1}^{n}2(d_{j}-1)$
orthogonal product states cannot be exactly discriminated by using only LOCC.
\end{theorem}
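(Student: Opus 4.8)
The plan is to follow the proof of Theorem~1 almost verbatim, the one genuine change being that the full permutation symmetry among the parties is now lost: since the local dimensions $d_{1},d_{2},\ldots,d_{n}$ may all differ, we can no longer reduce the whole analysis to a single distinguished party. Instead I would fix an \emph{arbitrary} party $s\in\{1,2,\ldots,n\}$ and show that any orthogonality-preserving measurement it performs on its own subsystem $\mathbb{C}^{d_{s}}$ must be trivial; because $s$ is arbitrary this covers every party. As in Theorem~1, suppose party $s$ begins with a measurement $\{M_{k}^{\dag}M_{k}:k=1,2,\ldots,l\}$ and write $M_{k}^{\dag}M_{k}=(m_{ab}^{k})$ with $a,b\in\{0,1,\ldots,d_{s}-1\}$ in the computational basis of $\mathbb{C}^{d_{s}}$.

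First I would isolate, among the $\sum_{j=1}^{n}2(d_{j}-1)$ states, exactly those whose $s$-th tensor factor is one of the superposition vectors $\sum_{l=0}^{d_{s}-1}\omega_{s}^{tl}|l\rangle$ with $\omega_{s}=e^{2\pi i/d_{s}}$, together with the few ``connector'' states whose $s$-th factor is a single basis vector $|(d_{s}-1)\rangle$ or $|q\rangle$. Pairing the connector states exactly as for the analogues of Eqs.~(2) and (3) forces the off-diagonal entries in the last row and column to vanish, that is $m_{0(d_{s}-1)}^{k}=m_{(d_{s}-1)0}^{k}=0$ and $m_{(d_{s}-1)q}^{k}=m_{q(d_{s}-1)}^{k}=0$ for $q=1,\ldots,d_{s}-2$. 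At this point I must check that the connector states really do occur at position $s$ with all the unmeasured factors contributing nonzero inner products; this is guaranteed by the hypothesis $d_{j}\geq 2$ for every $j$, which makes the labels $(d_{s}-1)$ and $q$ legitimate and keeps the cyclic ``wrap-around'' of the construction intact.

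Next, imposing mutual orthogonality of the post-measurement images of the family $\{\sum_{l}\omega_{s}^{tl}|l\rangle\}_{t=0}^{d_{s}-1}$ produces, exactly as in Theorem~1, a collection of $d_{s}$ linear systems of $d_{s}-1$ equations each in the weighted column sums $\sum_{j}(\overline{\omega}_{s})^{rj}m_{jp}^{k}$. The coefficient determinant of each system is a Vandermonde determinant built from distinct powers of $\omega_{s}$, hence nonzero by Lemma~1 and Lemma~3. Applying Cramer's rule (Lemma~2) yields a unique solution, and combining it with the boundary conditions above forces $m_{jp}^{k}=0$ for $j\neq p$ and $m_{pp}^{k}=m_{(d_{s}-1)(d_{s}-1)}^{k}$; thus $M_{k}^{\dag}M_{k}$ is proportional to $I_{d_{s}}$ for every $k$. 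Since party $s$ was arbitrary, no party can do anything other than a trivial measurement while preserving orthogonality, and therefore the set cannot be perfectly discriminated by LOCC.

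I expect the main obstacle to be bookkeeping rather than conceptual. Because the dimensions vary, I must verify that at \emph{every} position $s$ the construction genuinely contains the complete Vandermonde family of $d_{s}$ superposition states \emph{and} the required connectors, and that the index shifts at the wrap-around party remain consistent when $d_{s}\neq d_{s+1}$. Once this indexing is pinned down, the linear-algebra core of Theorem~1 transfers with $d$ replaced throughout by the local dimension $d_{s}$, and no new idea is needed.
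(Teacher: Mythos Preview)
Your proposal is correct and matches the paper's own approach: the paper simply remarks that Theorem~3 ``is a generalization of Theorem~1 and can be proved by the same way as theorem~1,'' and your outline does precisely this, replacing $d$ by the local dimension $d_{s}$ and correctly noting that the loss of full permutation symmetry forces one to run the argument for an arbitrary party $s$ rather than only the first.
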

\begin{widetext}
\begin{equation}
\nonumber
\begin{aligned}
|\phi_{t_{1}+1}\rangle=[\sum_{j=0}^{d_{1}-1}(\omega_{1})^{t_{1}j}|j\rangle]_{1}|(d_{2}-1)\rangle_{2}
|0\rangle_{3}|0\rangle_{4}\cdots|0\rangle_{n},\\
\end{aligned}
\end{equation}
\begin{equation}
\nonumber
\begin{aligned}
|\phi_{t_{2}+d_{1}+1}\rangle=|0\rangle_{1}[\sum_{j=0}^{d_{2}-1}(\omega_{2})^{t_{2}j}
|j\rangle]_{2}|(d_{3}-1)\rangle_{3}|0\rangle_{4}\cdots|0\rangle_{n},\\
\vdots\qquad\qquad\qquad\qquad\qquad\qquad\\
|\phi_{t_{\delta}+(\sum_{j=1}^{\delta-1}d_{j})+1}\rangle=|0\rangle_{1}\cdots
[\sum_{j=0}^{d_{\delta}-1}(\omega_{\delta})^{t_{\delta}j}
|j\rangle]_{\delta}|(d_{\delta+1}-1)\rangle_{\delta+1}\cdots|0\rangle_{n},\\
\vdots\qquad\qquad\qquad\qquad\qquad\qquad\\
|\phi_{t_{n-1}+(\sum_{j=1}^{n-2}d_{j})+1}\rangle=|0\rangle_{1}|0\rangle_{2}\cdots
|0\rangle_{n-2}
[\sum_{j=0}^{d_{n-1}-1}(\omega_{n-1})^{t_{n-1}j}|j\rangle]_{n-1}|(d_{n}-1)\rangle_{n},\\
|\phi_{t_{n}+(\sum_{j=1}^{n-1}d_{j})+1}\rangle=|(d_{1}-1)\rangle_{1}|0\rangle_{2}
|0\rangle_{3}\cdots|0\rangle_{n-1}
[\sum_{j=0}^{d_{n}-1}(\omega_{n})^{t_{n}j}|j\rangle]_{n},\\
|\phi_{(\sum_{j=1}^{n}d_{j})+q_{2}}\rangle=[\sum_{j=0}^{d_{1}-1}(\omega_{1})^{j}|j\rangle]_{1}
|q_{2}\rangle_{2}|0\rangle_{3}|0\rangle_{4}\cdots|0\rangle_{n},\\
|\phi_{(\sum_{j=1}^{n}d_{j})+(d_{2}-2)+q_{3}}\rangle=|0\rangle_{1}[\sum_{j=0}^{d_{2}-1}
(\omega_{2})^{j}|j\rangle]_{2}|q_{3}\rangle_{3}|0\rangle_{4}\cdots|0\rangle_{n},\\
\vdots\qquad\qquad\qquad\qquad\qquad\qquad\\
|\phi_{(\sum_{j=1}^{n}d_{j})+[\sum_{j=2}^{\delta}(d_{j}-2)]+q_{\delta+1}}\rangle
=|0\rangle_{1}\cdots[\sum_{j=0}^{d_{\delta}-1}(\omega_{\delta})^{j}|j\rangle]_{\delta}
|q_{\delta+1}\rangle_{\delta+1}\cdots|0\rangle_{n},\\
\vdots\qquad\qquad\qquad\qquad\qquad\qquad\\
|\phi_{(\sum_{j=1}^{n}d_{j})+[\sum_{j=2}^{n-1}(d_{j}-2)]+q_{n}}\rangle=|0\rangle_{1}\cdots
|0\rangle_{n-2}[\sum_{j=0}^{d_{n-1}-1}(\omega_{n-1})^{j}|j\rangle]_{n-1}|q_{n}\rangle_{n},\\
|\phi_{(\sum_{j=1}^{n}d_{j})+[\sum_{j=2}^{n}(d_{j}-2)]+q_{1}}\rangle=|q_{1}\rangle_{1}
|0\rangle_{2}\cdots|0\rangle_{n-1}[\sum_{j=0}^{d_{n}-1}(\omega_{n})^{j}|j\rangle]_{n},\\
\end{aligned}
\end{equation}
where $d_{j}\geq2$; $3\leq \delta\leq n-2$; $\omega_{\sigma}=e^{\frac{2\pi i}{d_{\sigma}}}$, $q_{\sigma}=1$, $2$, $\cdots$, $d_{\sigma}-2$  and $t_{\sigma}=0$, $1$, $2$, $\cdots$, $d_{\sigma}-1$ for $i=\sqrt{-1}$ and $\sigma=1$, $2$, $\cdots$, $n$.

It is easy to see that Theorem 3 is a generalization of Theorem 1 and can be proved by the same way as theorem 1. Theorem 3 gives us a direct way to construct a nonlocal set of orthogonal product states in arbitrary multipartite quantum system. Based on the structure of the set of the states in Theorem 3, we give a new construction of nonlocal set with smaller elements.

\begin{theorem}
In $\otimes_{j=1}^{n}\mathbb{C}^{d_{j}}$, the following $\sum_{j=1}^{n}(2d_{j}-3)+1$ product states cannot be exactly distinguished by using only LOCC.
\end{theorem}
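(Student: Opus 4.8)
The plan is to prove Theorem 4 by the same mechanism used for Theorem 1, now importing the ``stopper''-state device of Theorem 2 and the unequal-dimension bookkeeping of Theorem 3. Because the construction treats every party in the same cyclic way, it suffices to show that the first party can only apply a \emph{trivial} measurement --- one all of whose POVM elements are proportional to the identity --- while keeping the post-measurement states mutually orthogonal. I would let the first party begin with an orthogonality-preserving measurement $\{M_{k}^{\dag}M_{k}:\,k=1,\,2,\,\cdots,\,l\}$ and write each $M_{k}^{\dag}M_{k}$ as a $d_{1}\times d_{1}$ matrix $(m_{ab}^{k})$ in the basis $\{|0\rangle,\,|1\rangle,\,\cdots,\,|(d_{1}-1)\rangle\}$; the goal is to force this matrix to be a scalar multiple of $I_{d_{1}}$.

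First I would extract the vanishing of the boundary off-diagonal entries. Comparing the post-measurement versions of the state in which party $1$ carries $|(d_{1}-1)\rangle$ with those in which party $1$ carries $|0\rangle$ or $|q_{1}\rangle$ yields $m_{0(d_{1}-1)}^{k}=m_{(d_{1}-1)0}^{k}=0$ and $m_{(d_{1}-1)q_{1}}^{k}=m_{q_{1}(d_{1}-1)}^{k}=0$ for $q_{1}=1,\,2,\,\cdots,\,d_{1}-2$, exactly as Eqs. (6) and (7) were obtained in the treatment of Theorem 2.

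Next I would collect the orthogonality conditions among the first block of states, those of the form $[\sum_{j}(\omega_{1})^{t_{1}j}|j\rangle]_{1}|(d_{2}-1)\rangle_{2}|0\rangle_{3}\cdots|0\rangle_{n}$. The crucial structural feature of Theorem 4 is that $t_{1}$ now ranges only over $1,\,2,\,\cdots,\,d_{1}-1$, so this block contains one fewer state than in Theorem 3; the missing ``$t_{1}=0$'' relation is supplied by the single stopper state $(\sum_{j}|j\rangle)_{1}\cdots(\sum_{j}|j\rangle)_{n}$, whose first-party component equals $\sum_{j}(\omega_{1})^{0\cdot j}|j\rangle$ and which overlaps every first-block state with coefficient $1$ on each of the remaining tensor factors. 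Imposing $\langle\phi_{a}|(M_{k}^{\dag}M_{k}\otimes I\otimes\cdots\otimes I)|\phi_{b}\rangle=0$ over all such pairs, together with the stopper pairing, produces for each fixed column index $p\in\{0,\,1,\,\cdots,\,d_{1}-2\}$ a linear system in the unknowns $m_{0p}^{k},\,\cdots,\,m_{(d_{1}-1)p}^{k}$ whose coefficient matrix is the Vandermonde determinant in the distinct quantities $1,\,\overline{\omega_{1}},\,(\overline{\omega_{1}})^{2},\,\cdots,\,(\overline{\omega_{1}})^{d_{1}-1}$. This determinant is nonzero by Lemmas 1 and 3, so Kramer's rule (Lemma 2) delivers the unique solution $m_{jp}^{k}=0$ for $j\neq p$ and $m_{pp}^{k}=m_{(d_{1}-1)(d_{1}-1)}^{k}$, the analogue of Eqs. (4), (5) and (8). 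Together with the boundary relations this forces $M_{k}^{\dag}M_{k}=m_{(d_{1}-1)(d_{1}-1)}^{k}\,I_{d_{1}}$, \emph{i.e.}\ a trivial measurement.

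The main obstacle, and the step that most deserves care, is verifying that the reduced set together with the single stopper still supplies a \emph{full} Vandermonde system for \emph{every} party, not just for the first. Concretely, I must check that the stopper restores exactly the one equation deleted from each block, so that for party $\sigma$ the $d_{\sigma}\times d_{\sigma}$ coefficient determinant remains a nondegenerate Vandermonde determinant in the distinct powers of $\omega_{\sigma}$. This works precisely because the stopper carries the uniform component $\sum_{j}|j\rangle$ on \emph{every} site, so a single state repairs all $n$ blocks simultaneously. Since the dimensions $d_{\sigma}$ differ there is no literal permutation symmetry as in Theorem 1; instead the construction is invariant under the cyclic relabelling $1\to 2\to\cdots\to n\to 1$ of the parties, under which party $\sigma$ plays exactly the role party $1$ does, so the argument above carries over verbatim after replacing $(d_{1},\omega_{1})$ by $(d_{\sigma},\omega_{\sigma})$. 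As no party can do better than a trivial measurement at the first step, and the same conclusion recurs at every subsequent step of any LOCC protocol, the $\sum_{j=1}^{n}(2d_{j}-3)+1$ states cannot be perfectly distinguished by LOCC, which completes the proof.
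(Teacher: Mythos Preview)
Your proposal is correct and follows exactly the route the paper intends: the paper gives no separate proof of Theorem~4 but presents it as the simultaneous generalization of Theorem~2 (the stopper device replacing the $t=0$ block state) and Theorem~3 (unequal dimensions), both of which are in turn reduced to the Vandermonde argument of Theorem~1. Your identification of the relevant orthogonality pairs, the role of the stopper in restoring the missing $t_{\sigma}=0$ equation for every party, and the use of cyclic relabelling in place of literal permutation symmetry all match the paper's proof sketches for Theorems~1 and~2.
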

\begin{equation}
\nonumber
\begin{aligned}
|\phi_{t_{1}}\rangle=[\sum_{j=0}^{d_{1}-1}(\omega_{1})^{t_{1}j}|j\rangle]_{1}|(d_{2}-1)\rangle_{2}
|0\rangle_{3}|0\rangle_{4}\cdots|0\rangle_{n},\\
|\phi_{t_{2}+(d_{1}-1)}\rangle=|0\rangle_{1}[\sum_{j=0}^{d_{2}-1}(\omega_{2})^{t_{2}j}
|j\rangle]_{2}|(d_{3}-1)\rangle_{3}|0\rangle_{4}\cdots|0\rangle_{n},\\
\vdots\qquad\qquad\qquad\qquad\qquad\qquad\\
|\phi_{t_{\delta}+\sum_{j=1}^{\delta-1}(d_{j}-1)}\rangle=|0\rangle_{1}\cdots
[\sum_{j=0}^{d_{\delta}-1}(\omega_{\delta})^{t_{\delta}j}|j\rangle]_{\delta}|
(d_{\delta+1}-1)\rangle_{\delta+1}\cdots
|0\rangle_{n},\\
\end{aligned}
\end{equation}
\begin{equation}
\nonumber
\begin{aligned}
\vdots\qquad\qquad\qquad\qquad\qquad\qquad\qquad\\
|\phi_{t_{n-1}+\sum_{j=1}^{n-2}(d_{j}-1)}\rangle=|0\rangle_{1}|0\rangle_{2}\cdots
|0\rangle_{n-2}[\sum_{j=0}^{d_{n-1}-1}(\omega_{n-1})^{t_{n-1}j}|j\rangle]_{n-1}
|(d_{n}-1)\rangle_{n},\\
|\phi_{t_{n}+\sum_{j=1}^{n-1}(d_{j}-1)}\rangle=|(d_{1}-1)\rangle_{1}|0\rangle_{2}
|0\rangle_{3}\cdots|0\rangle_{n-1}[\sum_{j=0}^{d_{n}-1}(\omega_{n})^{t_{n}j}|j\rangle]_{n},\\
|\phi_{[\sum_{j=1}^{n}(d_{j}-1)]+q_{2}}\rangle=[\sum_{j=0}^{d_{1}-1}(\omega_{1})^{j}|j\rangle)_{1}
|q_{2}\rangle_{2}|0\rangle_{3}|0\rangle_{4}\cdots|0\rangle_{n},\\
|\phi_{[\sum_{j=1}^{n}(d_{j}-1)]+(d_{2}-2)+q_{3}}\rangle=|0\rangle_{1}[\sum_{j=0}^{d_{2}-1}
(\omega_{2})^{j}|j\rangle]_{2}|q_{3}\rangle_{3}|0\rangle_{4}\cdots|0\rangle_{n},\\
\vdots\qquad\qquad\qquad\qquad\qquad\qquad\qquad\\
|\phi_{[\sum_{j=1}^{n}(d_{j}-1)]+[\sum_{j=2}^{\delta}(d_{j}-2)]+q_{\delta+1}}\rangle
=|0\rangle_{1}\cdots[\sum_{j=0}^{d_{\delta}-1}(\omega_{\delta})^{j}|j\rangle]_{\delta}
|q_{\delta+1}\rangle_{\delta+1}\cdots|0\rangle_{n},\\
\vdots\qquad\qquad\qquad\qquad\qquad\qquad\qquad\\
|\phi_{[\sum_{j=1}^{n}(d_{j}-1)]+[\sum_{j=2}^{n-1}(d_{j}-2)]+q_{n}}\rangle=|0\rangle_{1}
\cdots|0\rangle_{n-2}[\sum_{j=0}^{d_{n-1}-1}(\omega_{n-1})^{j}|j\rangle]_{n-1}|q_{n}\rangle_{n},\\
|\phi_{[\sum_{j=1}^{n}(d_{j}-1)]+[\sum_{j=2}^{n}(d_{j}-2)]+q_{1}}\rangle=|q_{1}\rangle_{1}
|0\rangle_{2}\cdots|0\rangle_{n-1}[\sum_{j=0}^{d_{n}-1}(\omega_{n})^{j}|j\rangle]_{n},\\
|\phi_{(\sum_{j=1}^{n}2d_{j})-3n+1}\rangle=(\sum_{j=0}^{d_{1}-1}|j\rangle)_{1}
(\sum_{j=0}^{d_{2}-1}|j\rangle)_{2}\cdots(\sum_{j=0}^{d_{n}-1}|j\rangle)_{n},
\end{aligned}
\end{equation}
\end{widetext}
where $d_{j}\geq 2$; $3\leq\delta\leq n-2$; $\omega_{\sigma}=e^{\frac{2\pi i}{d_{\sigma}}}$, $q_{\sigma}$ =$1$, $2$, $\cdots$, $d_{\sigma}-2$ and $t_{\sigma}=1$, $2$, $\cdots$, $d_{\sigma}-1$ for $i=\sqrt{-1}$ and $\sigma$=$1$, $2$, $\cdots$, $n$.

From the above constructions in Theorem 1-4, it can be seen easily that the nonlocal sets of orthogonal product states constructed by us have simple structures and good symmetry.
\section{Conclusion}
The local discrimination of quantum states is an important research content in the field of quantum information \cite{Feng2009,Bravyi2004,Johnston2014,Chen2015,Nathaniel2013}. The research results of local distinguishability of orthogonal quantum states \cite{Fei2006,Halder2019,Band2018,Band2016,xu2016,Zhangxiaoqian2017,
Rout2019} not only make it easy for people to understand quantum nonlocality, but also provide theoretical basis and technical support for people to design quantum protocols. As we know, it is a difficult problem to construct a nonlocal set of multipartite orthogonal product states.

In Refs. \cite{Wang2017,Zhang2017}, the authors respectively construct different sets of locally indistinguishable multipartite orthogonal product states by using locally indistinguishable bipartite orthogonal product states. Ref. \cite{Halder2018} gave a construction method of locally indistinguishable multipartite orthogonal product states with 2$n(d-1)$ members in $\otimes_{j=1}^{n}\mathbb{C}^{d}$. Different from the above results, we give two construction methods of nonlocal orthogonal product states with $2n(d-1)$ members and $n(2d-3)+1$ members in $\otimes_{j=1}^{n}\mathbb{C}^{d}$, respectively. Furthermore, we generalize our construction method to more general cases, \emph{i.e.}, in $\otimes_{j=1}^{n}\mathbb{C}^{d_{j}}$ quantum system, where $d_{j}\geq 2$. From the constructions of the sets of product states generated by our methods, we know that the sets have symmetrical structures, which is different from the set of locally indistinguishable multipartite product states generated by locally indistinguishable bipartite product states. A further work is to give the classification of different construction methods.
\begin{acknowledgments}
This work is supported by Natural Science Foundation of Shandong province of China (Grants No. ZR2019MF023), Open Foundation of State Key Laboratory of Networking
and Switching Technology (Beijing University of Posts and Telecommunications)
(SKLNST-2019-2-01) and SDUST Research Fund.
\end{acknowledgments}
\vbox{}

\nocite{*}
\bibliography{apssamp}

\begin{thebibliography}{}\label{sec:TeXbooks}
\bibitem{Bennett1999}
   C. H. Bennett, D. P.DiVincenzo, C. A. Fuchs, T. Mor, E. Rains, P. W. Shor, J. A. Smolin, and W. K. Wootters, Quantum nonlocality without entanglement, Phys. Rev. A 59, 1070 (1999).
\bibitem{Walgate2002}
   J. Walgate and L. Hardy, Nonlocality, Asymmetry, and Distinguishing Bipartite States, Phys. Rev. Lett. 89, 147901 (2002).
\bibitem{CHB1999}
   C. H. Bennett, D. P. DiVincenzo, T. Mor, P. W. Shor, J. A. Smolin, and B. M. Terhal, Unextendible product bases and bound entanglement, Phys. Rev. Lett. 82, 5385 (1999).
\bibitem{Walgate2000}
   J. Walgate, A. J. Short, L. Hardy, and V. Vedral, Local distinguishability of multipartite orthogonal quantum states, Phys. Rev. Lett. 85, 4972 (2000).
\bibitem{Niset2006}
   J. Niset and N. J. Cerf, Multipartite nonlocality without entanglement in many dimensions, Phys. Rev. A 74, 052103 (2006).
\bibitem{Jiang2010}
   W. Jiang, X. J. Ren, Y. C. Wu, Z. W. Zhou, G. C. Guo and H. Fan, A sufficient and necessary condition for $2n-1$ orthogonal states to be locally distinguishable in a $C^{2}\otimes C^{n}$ system, J. Phys. A: Math. Theor. 43, 325303 (2010).
\bibitem{Yu2011}
   N. K. Yu, R. Y. Duan and M. S. Ying, Any $2\otimes n$ subspace is locally distinguishable, Phys. Rev. A 84, 012304 (2011).
\bibitem{Yang2013}
   Y. H. Yang, F. Gao, G. J. Tian, T. Q. Cao and Q. Y. Wen, Local distinguishability of orthogonal quantum states in a $2\otimes2\otimes2$ system, Phys. Rev. A 88, 024301 (2013).
\bibitem{Xin2008}
   Y. Xin and R. Y. Duan, Local distinguishability of orthogonal $2\otimes3$ pure states, Phys. Rev. A 77, 012315 (2008).
\bibitem{Duan2010}
   R. Y. Duan, Y. Xin and M. S. Ying, Locally indistinguishable subspaces spanned by three-qubit unextendible product bases, Phys. Rev. A 81, 032329 (2010).
\bibitem{MA2014}
   T. Ma, M. J. Zhao, Y. K. Wang, and S. M. Fei, Noncommutativity and local indistinguishability of quantum states, Sci. Rep. 4, 6336 (2014).
\bibitem{Zhangzc2015}
   Z. C. Zhang, F. Gao, Y. Cao, S. J. Qin, and Q. Y. Wen, Local indistinguishability of orthogonal product states, Phys. Rev. A 93, 012314 (2016).
\bibitem{DiVincenzo2003}
   D. P. DiVincenzo, T. Mor, P. W. Shor, J. A. Smolin, and B.M. Terhal, Unextendible product bases, uncompleteable product bases and bound entanglement, Commun. Math. Phys. 238, 379 (2003).
\bibitem{Zhang2014}
   Z. C. Zhang, F. Gao, G. J. Tian, T. Q. Cao, and Q. Y. Wen, Nonlocality of orthogonal product basis quantum states, Phys. Rev. A 90, 022313 (2014).
\bibitem{Wang2015}
   Y. L. Wang, M. S. Li, Z. J. Zheng, and S. M. Fei, Nonlocality of orthogonal product-basis quantum states, Phys. Rev. A 92, 032313 (2015).
\bibitem{Sixia2015}
   S. X. Yu, and C.H. Oh, Detecting the local indistinguishability of maximally entangled states, arXiv: 1502.01274v1[quant-ph] (2015).
\bibitem{Guo2001}
   G. P. Guo, C. F. Li, et al., Quantum key distribution scheme with orthogonal product states. Phys. Rev. A 64, 042301 (2001)
\bibitem{Rahaman2015}
   R. Rahaman and M. G. Parker, Quantum scheme for secret sharing based on local distinguishability, Phys. Rev. A 91, 022330 (2015).
\bibitem{JWang2017}
   J. Wang, L. Li, H. Peng, and Y. Yang, Quantum-secret-sharing scheme based on local distinguishability of orthogonal multiqudit entangled states, Phys. Rev. A 95, 022320 (2017).
\bibitem{Yang2015}
   Y. H. Yang, F. Gao, X. Wu, S. J. Qin, H. J. Zuo, and Q. Y. Wen, Quantum secret sharing via local operations and classical communication, Sci. Rep. 5, 16967 (2015).
\bibitem{DHJQ2020}
   D. H. Jiang, Q. Z. Hu,  X. Q. Liang, and G. B. Xu, A trusted third-party E-payment protocol based on locally indistinguishable orthogonal product states. Int. J. Theor. Phys., (2020). https://doi.org/10.1007/s10773-020-04413-4
\bibitem{Hfq2020}
   D. H. Jiang, J. Wang, X.Q. Liang, G. B. Xu, and H. F. Qi, Quantum voting scheme based on locally indistinguishable orthogonal product states, Int. J. Theor. Phys. 59, 436-444 (2020).
\bibitem{Chen2004}
   P. X. Chen, C. Z. Li,   Distinguishing the elements of a full product basis set needs only projective measurements and classical communication, Phys. Rev. A 70, 022306 (2004).
\bibitem{Xu2016}
   G. B. Xu, Q. Y. Wen, S. J. Qin, Y. H. Yang, and F. Gao, Quantum nonlocality of multipartite orthogonal product states, Phys. Rev. A 93, 032341 (2016).
\bibitem{Wang2017}
   Y. L. Wang, M. S. Li, Z. J. Zheng, and S. M. Fei, The local indistinguishability of multipartite product states, Quant. Info. Proc. 16, 5 (2017).
\bibitem{Halder2018}
   S. Halder, Several nonlocal sets of multipartite pure orthognal product states, Phys. Rev. A 98, 022303 (2018)
\bibitem{Zhang2017}
   Z. C. Zhang, K. J. Zhang, F. Gao, Q. Y. Wen, and C.H. Oh, Construction of nonlocal multipartite quantum states, Phys. Rev. A 95, 052344 (2017).
\bibitem{HalderM2019}
   S. Halder, M. Banik, S. Ghosh, Family of bound entangled states on the boundary of the Peres set, Phys. Rev. A 99, 062329 (2019).
\bibitem{Rinaldis}
   S. De Rinaldis, Distinguishability of complete and unextendible product bases, Phys. Rev. A 70, 022309 (2004).
\bibitem{Gai2007}
   Y. Y. Gai, and G. J. Bao, Complex functions and integral transformations (second edition), Science Press, Beijing, China, pp. 11 (2007).
\bibitem{Depart2014}
   Department of mathematics of Tongji University, Engineering mathematics-linear algebra (sixth Edition), higher education press, Beijing, China, pp. 44 (2014).
\bibitem{Department2014}
   Department of mathematics of Tongji University, Engineering mathematics-linear algebra (sixth Edition), higher education press, Beijing, China, pp. 18 (2014).
\bibitem{Feng2009}
   Y. Feng and Y. Y. Shi, Characterizing locally indistinguishable orthogonal product states, IEEE Trans. Inf. Theory 55, 2799 (2009).
\bibitem{Bravyi2004}
   S. B. Bravyi, Unextendible product bases and locally unconvertible bound entangled states, Quantum Inf. Process, 3, 309-329 (2004)
\bibitem{Johnston2014}
   N. Johnston. The structure of qubit unextendible product bases, J. Phys. A: Math. Theor. 47 424034 (2014)
\bibitem{Chen2015}
   J. X. Chen, N. Johnston, The minimum size of unextendible product bases in the bipartite case (and some multipartite cases), Commun. Math. Phys. 333, 351-365 (2015).
\bibitem{Nathaniel2013}
   N. J., The minimum size of qubit unextendible product bases, Proceedings of the 8th Conference on the Theory of Quantum Computation, Communication and Cryptography, 93-105 (2013).
\bibitem{Fei2006}
   S. M. Fei, X. Q. Li-Jost, and B. Z. Sun, A class of bound entangled states, Phys. Lett. A 352, 321 (2006).
\bibitem{Halder2019}
   S. Halder, M. Banik, S. Agrawal, and S. Bandyopadhyay, Strong Quantum Nonlocality without Entanglement, Phys. Rev. Lett. 122, 040403 (2019).
\bibitem{Band2018}
   S. Bandyopadhyay, S. Halder, and M. Nathanson, Optimal resource states for local state discrimination, Phys. Rev. A 97, 022314 (2018).
\bibitem{Band2016}
   S. Bandyopadhyay, S. Halder, and M. Nathanson, Entanglement as a resource for local state discrimination in multipartite systems, Phys. Rev. A 94, 022311 (2016).
\bibitem{xu2016}
   G. B. Xu, Y. H. Yang, Q. Y. Wen, S. J. Qin, and F. Gao, Locally indistinguishable orthogonal product bases in arbitrary bipartite quantum system, Sci. Rep. 6, 31048 (2016).
\bibitem{Zhangxiaoqian2017}
   X. Q. Zhang, J. Weng, X. Q. Tan, and W. Q. Luo, Indistinguishability of pure orthogonal product states by LOCC, Quantum Inf. Process. 16, 168 (2017)
\bibitem{Rout2019}
   S. Rout, A. G. Maity, A. Mukherjee, S. Halder, and M. Banik, Genuinely nonlocal product basis: Classification and entanglement-assisted discrimination, Phy. Rev. A 100, 032321 (2019).
\end{thebibliography}
\end{document}